\newtheorem{proposition}{Proposition}
\newcommand{\mbb}{\mathbb}
\newcommand{\mc}{\mathcal}
\newcommand{\tr}{\textrm{Tr}}
\newcommand{\ket}[1]{|#1\rangle}
\newcommand{\bra}[1]{\langle #1|}
\newcommand{\op}[2]{|#1\rangle\langle #2|}
\newcommand{\ip}[2]{\langle #1| #2 \rangle}
\newcommand{\wt}{\widetilde}
\definecolor{cool_green}{rgb}{0.0, 0.5, 0.0}
\newcommand{\todo}[1]{{\color{red} #1}}
\begin{document}

\preprint{APS/123-QED}

\title{Exploring Non-Multiplicativity in the Geometric Measure of Entanglement}


\author{Daniel Dilley}
\email{ddilley@anl.gov}
\affiliation{Mathematics and Computer Science Division, Argonne National Laboratory}
 \author{Jerry Chang}
 \email{chiehc3@illinois.edu}
 \affiliation{Department of Electrical and Computer Engineering and Coordinated Science Laboratory, University of Illinois at Urbana-Champaign}
\author{Jeffrey Larson}
 \email{jmlarson@anl.gov}
\affiliation{Mathematics and Computer Science Division, Argonne National Laboratory}
\author{Eric Chitambar}
\email{echitamb@illinois.edu}
\affiliation{Department of Electrical and Computer Engineering and Coordinated Science Laboratory, University of Illinois at Urbana-Champaign}
\affiliation{Illinois Quantum Information Science and Technology (IQUIST) Center, University of Illinois Urbana-Champaign}

\date{\today}

\begin{abstract}
The geometric measure of entanglement (GME) quantifies how close a multi-partite quantum state is to the set of separable states under the Hilbert-Schmidt inner product.  
The GME can be non-multiplicative, meaning that the closest product state to two states is entangled across subsystems.  
In this work, we explore the GME in two families of states: those that are invariant under bilateral orthogonal $(O\otimes O)$ transformations, and mixtures of singlet states.  In both cases, a region of GME non-multiplicativity is identified around the anti-symmetric projector state.  We employ state-of-the-art numerical optimization methods and models to quantitatively analyze non-multiplicativity in these states for $d=3$.  We also investigate a constrained form of GME that measures closeness to the set of \textit{real} product states and show that this measure can be non-multiplicative even for real separable states.

\end{abstract}

\maketitle


\section{Introduction}

One of the first entanglement measures studied in the literature is the geometric measure of entanglement (GME).  First proposed by Shimony for the case of bipartite pure states \cite{Shimony-1995a} and later extended to multipartite states by Barnum and Linden \cite{Barnum-2001a}, the GME measures how closely a given state resembles a product state under the Hilbert-Schmidt inner product.  That is, for an $N$-partite state $\ket{\Psi}^{A_1\cdots A_N}$, its GME is defined as
    \begin{equation}
     \Lambda^2(\Psi^{A_1\cdots A_N})=\max_{\ket{a_1,a_2,\cdots,a_N}}|\ip{\Psi}{a_1,a_2,\cdots,a_N}|^2,
 \end{equation}
 where the maximization is taken over all product states $\ket{a_1,a_2,\cdots,a_N}\equiv \ket{a_1}\otimes\ket{a_2}\otimes\cdots\otimes\ket{a_N}$. Subsequent works have explored the analytical properties of GME, explicitly calculating its value for special types of states and connecting it to other entanglement measures \cite{Wei-2003a, Wei-2004a, Cavalcanti-2006a, Markham-2007a, Hayashi-2008a, Tamaryan-2009a, Hubener-2009a, Jung-2008a, Zhu-2011a}.
 
For the case of tripartite states, the GME can be computed by analyzing the reduced density matrix of any two parties. 
This involves extending the GME measures to mixed states through the definition $\Lambda^2(\rho^{A_1\cdots A_N})=\max_{\ket{a_1,\cdots,a_N}}\bra{a_1,\cdots,a_N}\rho\ket{a_1,\cdots,a_N}$.  One thus obtains
\begin{align}
    \Lambda^2(\Psi^{A_1\cdots A_N})=\Lambda^2(\rho^{A_1\cdots A_{N-1}}),
\end{align}
where $\rho^{A_1\cdots A_{N-1}}=\tr_{A_N}\Psi^{A_1\cdots A_N}$.  While $\Lambda^2$ is a \textit{bona fide} entanglement measures for pure states, this is no longer the case when evaluated on mixed states. 
This is because $\Lambda^2$ is a monotone under local operations and classical communication (LOCC) only when restricted to pure states; for the general case, one must use a convex-roof extended version of $\Lambda^2$ to recover LOCC monotonicity \cite{Vidal-2000a}. 

In this paper we consider the notoriously difficult question of GME multiplicativity.  For two states $\rho$ and $\sigma$, their GME is said to be multiplicative if $\Lambda^2(\rho\otimes\sigma)=\Lambda^2(\rho)\Lambda^2(\sigma)$.  For any pair of bipartite pure states, multiplicativity of the GME follows directly from their Schmidt decomposition \cite{Shimony-1995a}.  However, outside of special cases like this, the GME is no longer guaranteed to be multiplicative.  The first known example of non-multiplicativity was demonstrated by \citet{Werner-2002a}.  Hayden and Winter \cite{Hayden-2008a} proved the existence of states that are strongly non-multiplicative (i.e. $\Lambda^2(\rho\otimes\sigma)>\!>\Lambda^2(\rho)\Lambda^2(\sigma)$).   \citet{Zhu-2011a} later established non-multiplicativity for different anti-symmetric multipartite states, and further, they furnished a random sampling argument implying that certain families of multipartite states generically have non-multiplicative GME.  However, it still remains an open challenge to decide whether the GME is multiplicative for a given pair of states, even in small dimensions with few parties.

An equivalent form of our question can be phrased in terms of the maximal output entropy of a completely-positive (CP) map.  Letting $\mc{D}(A)$ and $\mc{D}(B)$ denote the set of density matrices on systems $A$ and $B$, respectively, and $\Vert\rho\Vert_p=[\tr(\rho^p)]^{1/p}$ denoting the Schatten $p$-norm of $\rho\in\mc{D}(B)$, the maximal output $p$-purity of a CP map $\mc{N}:\mc{D}(A)\to\mc{D}(B)$ is defined as $\gamma_p(\mc{N})=\max_{\ket{a}\in A}\Vert \mc{N}(\op{a}{a})\Vert_p$.  Taking the limit $p\to \infty$ yields the spectral norm $\Vert\rho\Vert_\infty=\max_{\ket{b}}\bra{b}\rho\ket{b}$.  In this case, the maximal output $\infty$-purity is given by
\begin{align}
    \gamma_\infty(\mc{N})=\max_{\ket{a}\in A, \ket{b}\in B}\bra{b}\mc{N}(\op{a}{a})\ket{b}.
\end{align}
To connect this to the GME, we use the Choi-Jamio{\l}kowski operator $J_{\mc{N}}:=\text{id}\otimes\mc{N}(\phi^+)$, where $\phi^+=\sum_{i,j=1}^{|A|}\op{ii}{jj}$, and note the relationship $\mc{N}(\op{a}{a})=\tr_A[(\op{a^*}{a^*}^A\otimes\mbb{I}^B) J_{\mc{N}}]$ with $X^*$ denoting the complex conjugate of the Hermitian operator $X$.  Then by inspection we obtain 
\begin{align}
    \gamma_\infty(\mc{N})=\Lambda^2(J_{\mc{N}}).
\end{align}
Conversely, any bipartite density matrix $\omega^{AB}$ can be uniquely identified with the CP map $\mc{N}_\omega$ through the relationship $\mc{N}_\omega(\rho)=\tr_A[(\rho^*\otimes\mbb{I})\omega^{AB}]$ \cite{Watrous-2018a}.  From this, we can phrase the question of GME multiplicativity in terms of maximal output purity multiplicativity:
\begin{align}
\Lambda^2(\omega\otimes\sigma)&=\Lambda^2(\omega)\Lambda^2(\sigma) \notag\\\;\Leftrightarrow\;\;\gamma_\infty(\mc{N}_\omega\otimes\mc{N}_\sigma)&=\gamma_\infty(\mc{N}_\omega)\gamma_\infty(\mc{N}_\sigma).
\end{align}
Questions on channel additivity and multiplicativity have been heavily studied throughout the development of quantum information theory.  Turning particularly to the works of King, it has been shown that $\gamma_\infty$ is multiplicative whenever one of the channels acts on a qubit system \cite{King-2006a} or it is entanglement-breaking \cite{King-2002a}.  In terms of the Choi-Jamio{\l}kowski operator, qubit channels correspond to two-qubit states and entanglement-breaking channels correspond to separable states.  Note that the proofs for multiplicativity of $\gamma_\infty$ in \cite{King-2006a} and \cite{King-2002a} consider only channels that are trace-preserving maps. On the level of bipartite density matrices, this trace-preserving assumption translates into a condition that (by symmetry) one of the subsystems is maximally mixed.  In \Cref{app:more}, we show that the trace-preserving assumption can be relaxed for entanglement-breaking channels and the multiplicativity of $\gamma_\infty$ still holds.  More generally, we give a direct proof that the GME is strongly multiplicative for any multipartite separable state. To provide a non-exhaustive summary on some of the known sufficient conditions for GME multiplicativity in bipartite systems, a state $\rho^{AB}$ satisfies $\Lambda^2(\rho^{AB}\otimes \sigma^{AB})=\Lambda^2(\rho^{AB})\Lambda^2(\sigma^{AB})$ for any other state $\sigma^{AB}$ if one or more of the following holds:
\begin{enumerate}
    \item[(a)] $\rho^{AB}$ is a two-qubit state with one system having a maximally mixed reduced density matrix \cite{King-2002a};
    \item[(b)] $\rho^{AB}$ is a separable state \cite{King-2006a} (and \Cref{app:more});
    \item[(c)] $\rho^{AB}$ can be written in some basis as a matrix with non-negative elements \cite{Zhu-2011a}.
    \item[(d)] $\rho^{AB}$ is a mixture of certain maximally entangled states \cite{Datta-2005a}.
\end{enumerate}
\noindent To our knowledge, it is still unknown whether multiplicativity of GME holds for all two-qubit states $\rho^{AB}$, although we have not been able to find a counterexample through extensive numerical searches.

In this work, we study the multiplicativity of GME in two-qutrit systems.  The first known example of non-multiplicativity was found by Holevo and Werner \cite{Werner-2002a} in the normalized projector onto the $3\otimes 3$ anti-symmetric subspace.  This is one extreme point in the family of $U\otimes U$ invariant states, the so-called Werner states, given in arbitrary $d\otimes d$ systems by
\begin{align}
    \gamma^{AB}_\lambda=\frac{2(1-\lambda)}{d(d+1)}\pi^+ + \frac{2\lambda}{d(d-1)}\pi^-,
\end{align}
where $\pi^{\pm}=(\mbb{I}\otimes\mbb{I}\pm\mbb{F})/2$ are the symmetric and
anti-symmetric projectors, and $\mbb{F}=\sum_{i=1}^d\op{ji}{ij}$ is the SWAP
operator. Zhu \textit{et al.} provide an intuitive explanation for the fact that 
$\Lambda^2(\gamma_1\otimes\gamma_1)\geq\Lambda^2(\gamma_1)\Lambda^2(\gamma_1)$
for all $d\geq 3$ \cite{Zhu-2011a}.  Roughly speaking, anti-symmetric states are more entangled than symmetric ones, and since two copies of an anti-symmetric state is globally symmetric, one might expect the latter to sit relatively closer to the set of unentangled states.  
Here we further explore this intuition by studying the GME in two families of $3\otimes 3$ states.  The first is the class of $O\otimes O$ invariant states with $O$ being any element from the orthogonal group $O(3)$.  The second family is the convex hull of all two-level anti-symmetric states, that is, states of the form $\frac{1}{\sqrt{2}}(\ket{ij}-\ket{ji})$ for any $i\neq j$.  In both of these families, we discover new instances of GME non-multiplicativity around the anti-symmetric projector $\pi^-$.  To find these examples, we use a powerful numerical optimization technique described in \Cref{Sect:3x3-examples}.  
In \Cref{Sect:real-GME} we consider a restricted form of the GME where maximization is carried out with respect to \textit{real} product vectors: vectors whose coefficients are real-valued in the computational basis. Curiously, we find that the multiplicativity of GME no longer holds for separable states, even for two-qubit systems and for density matrices that are themselves real-valued.  Such results are of particular interest in real theories of quantum mechanics (e.g., \cite{Wootters_2014}) and resource theories of imaginarity (e.g., \cite{Hickey-2018a, Wu-2021a}).

\section{Non-Multiplicativity of GME in two Qutrits}

\label{Sect:3x3-examples}

To find instances of non-multiplicativity, we combine analytical and numerical approaches.  We first analytically compute the GME of a given two-qutrit state $\rho^{AB}$, which is feasible due to the high symmetry in the states we consider. We then numerically explore the two-copy multiplicativity of these states. This is done by simplifying the global objective function 
\begin{equation}
  \bra{\psi,\phi}\rho^{AB}\otimes\rho^{A'B'}\ket{\psi,\phi}
  \label{eq:objective}
\end{equation}
for parametrized product vectors $\ket{\psi}^{AA}\otimes\ket{\phi}^{BB'}$, and then using an optimizer to search for the largest value. All Mathematica files to run the optimizations are uploaded to GitHub \cite{github_dilley}. 
 
In addition to Mathematica's built-in heuristic optimization approaches, we also model our numerical optimization problems in the General Algebraic Modeling System (GAMS)~\cite{gams} and use the BARON~\cite{baron} rigorous global optimization solver, which uses a branch-and-reduce algorithm to maintain both best-found GME (lower bounds) and best-possible GME values (upper bounds) throughout the maximization process. By improving both of these values, BARON provides a certificate of optimality that is not provided by most other optimization routines.

When the optimizer identifies a state where \eqref{eq:objective} exceeds the largest-possible value of $\Lambda^2(\rho^{AB})$, then we have definitively detected an instance of non-multiplicativity.  On the other hand, if the optimizer terminates with a proof that outputs an upper bound of \eqref{eq:objective} equal to $\Lambda^2(\rho^{AB})$, then the state has multiplicative GME. For difficult problem instances, there can be gap between the upper and lower bounds on \eqref{eq:objective} found by the optimizer due to a finite running time. Our experiments use relative threshold of $10^{-5}$ to conclude when the GME is multiplicative when the optimizer returns an upper bound within this threshold. Technically, it may be possible that the GME is non-multiplicative in these instances, but the violation must be very small.

\subsection{($O\otimes O$)-invariant states}

\label{Sect:OO-states}

A bipartite $d\otimes d$ state $\rho^{AB}$ is called orthogonally invariant if $(O\otimes O) \rho^{AB}(O\otimes O)^\dagger=\rho^{AB}$ for every $d$-dimensional orthogonal matrix $O$; i.e. $OO^T=\mbb{I}$ where $O^T$ is the matrix transpose of $O$ in some fixed basis.  The collection of all such invariant states forms a two-parameter family given by
\begin{align}
    \omega_{x,y}&=\frac{2x}{(d-1)(d+2)}(\pi^+-\Phi^+)+\frac{2y}{d(d-1)}\pi^-\notag\\ \label{Eq:OO_Symmetric_States}
    &+(1-x-y)\Phi^+,
\end{align}
where $\Phi^+=\frac{1}{d}\sum_{i,j=1}^d\op{ii}{jj}$ \cite{Vollbrecht-2001a}.  Positivity of $\omega_{x,y}$ requires that $x,y\geq 0$ and $x+y\leq 1$.  To compute the GME of $\omega_{x,y}$, we use the relationships 
\begin{align}
    \bra{\alpha,\beta}\mbb{F}\ket{\alpha,\beta}&=|\ip{\alpha}{\beta}|^2,\notag\\
    \bra{\alpha,\beta}\Phi^+\ket{\alpha,\beta}&= \dfrac{1}{d}|\ip{\alpha^*}{\beta}|^2
\end{align}
for arbitrary vectors $\ket{\alpha}^A$ and $\ket{\beta}^B$.  Then a straightforward calculation shows that
\begin{align}
    &\bra{\alpha,\beta}\omega_{x,y}\ket{\alpha,\beta}\notag\\
    =&\;\frac{2x}{(d-1)(d+2)}\left[\frac{1+|\ip{\alpha}{\beta}|^2}{2}-\frac{|\ip{\alpha^*}{\beta}|^2}{d}\right] \notag\\
    &+\frac{2y}{d(d-1)}\left[\frac{1-|\ip{\alpha}{\beta}|^2}{2}\right]+(1-x-y)\frac{|\ip{\alpha^*}{\beta}|^2}{d}\notag\\
    =&\;\frac{2y +d(x+y)}{d(d-1)(d+2)}+\frac{-2y +d(x-y)}{d(d-1)(d+2)}|\ip{\alpha}{\beta}|^2 \notag\\
    &+\left[\frac{1-y}{d}-\frac{(d+1)x}{(d-1)(d+2)}\right]|\ip{\alpha^*}{\beta}|^2.
    \label{Eq:OO-product-optimization}
\end{align}

Note that $0\leq |\ip{\alpha}{\beta}|,|\ip{\alpha^*}{\beta}|\leq 1$, and furthermore, we can simultaneously attain any combination of these extreme points.  For example, the choice $\ket{\alpha}=\ket{\beta}=\frac{1}{\sqrt{2}}(\ket{1}+i\ket{2})$ attains $\ip{\alpha}{\beta}=1$ and $\ip{\alpha^*}{\beta}=0$, while the choice $\ket{\alpha}=\ket{\beta}=\ket{1}$ attains $\ip{\alpha}{\beta}=\ip{\alpha^*}{\beta}=1$.  The maximum of \eqref{Eq:OO-product-optimization} over normalized product vectors $\ket{\alpha,\beta}$ will then be obtained by choosing $\ip{\alpha}{\beta},\ip{\alpha^*}{\beta}\in\{0,1\}$ based on the sign of its respective coefficient.  We therefore conclude that
\begin{align}
    \label{Eq:GME-OO}
\Lambda^2(\omega_{x,y})=\max\begin{cases}\frac{2y +d(x+y)}{d(d-1)(d+2)},\\ \frac{2x}{(d-1)(d+2)},\\
    \frac{1}{d}-\frac{x d}{(d-1)(d+2)}-\frac{y(d-2)}{d(d-1)}, \\\frac{1-y}{d}-\frac{x}{(d+2)}\end{cases}.
\end{align}

As a natural candidate for the two-copy GME, we consider two copies of $\ket{\Phi^+}=\frac{1}{\sqrt{d}}\sum_{i=1}^{d}\ket{ii}$, held on systems $AA'$ and $BB'$ respectively.  In the following calculation, we obtain a lower bound on the geometric measure:
\begin{align}
    &\bra{\Phi^+}^{AA'}\bra{\Phi^+}^{BB'}\omega_{xy}^{AB}\otimes\omega_{xy}^{A'B'}\ket{\Phi^+}^{AA'}\ket{\Phi^+}^{BB'}\notag\\ \label{Eq:Global_phiP}
    =&\;\dfrac{2x^2 }{d^2 (d-1) (d+2)} + \dfrac{2y^2}{d^3 (d-1)} +\dfrac{(1-x-y)^2}{d^2}.
\end{align}
To make a comparison, we focus on the special one-parameter subfamily defined by the choice $x=0$.  In this case, we can use \eqref{Eq:GME-OO} and \eqref{Eq:Global_phiP} to conclude
\begin{align} 
    \Lambda^2(\omega_{0,y})^2&=\left(\dfrac{1}{d} - \dfrac{y(d-2)}{d(d-1)}\right)^2 \label{Eq:Local}  \\
    \Lambda^2(\omega_{0,y}^{\otimes 2})&\geq\dfrac{2y^2}{d^3 (d-1)} + \dfrac{(1-y)^2}{d^2} \label{Eq:Global}
\end{align}
The plot in \Cref{Fig:Crossover_Plot} shows the $y$-value where \eqref{Eq:Local} equals to the lower bound of \eqref{Eq:Global} for dimensions 3 through 15. Note that the crossover value of $y$ approaches 1 as the Hilbert space dimension $d$ increases.

\begin{figure}[h!]
    \centering
    \includegraphics[width=0.86\linewidth]{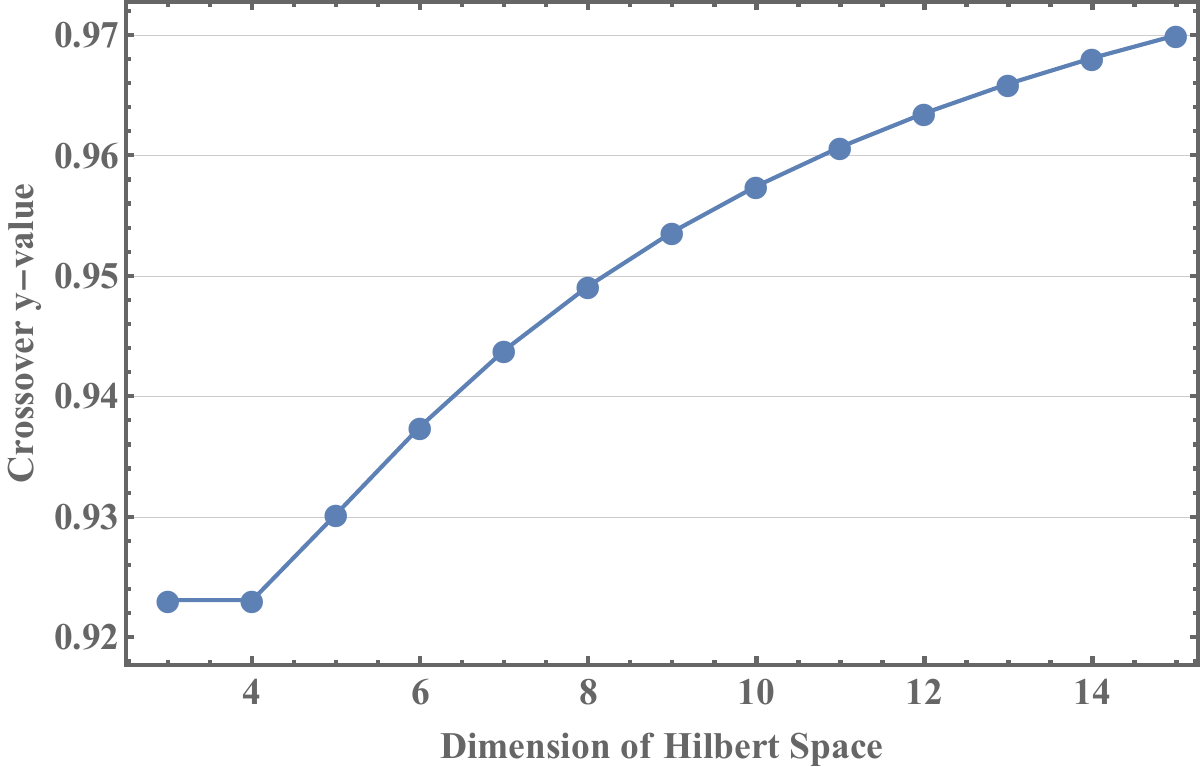}
    \caption{Crossover values of $y$ as a function of $d$ for when the fidelity of $\omega_{0,y}^{\otimes 2}$ with $\ket{\Phi^+}^{AA'}\ket{\Phi^+}^{BB'}$ exceeds $\Lambda^{2}(\omega_{0,y})^2$.  Points above the line correspond to positive instances of non-multiplicativity.}
    \label{Fig:Crossover_Plot}
\end{figure}

To explore the full family $\omega_{x,y}$, we use our numerical optimization.  The results for $\Lambda^2(\omega_{x,y}^{AB}\otimes\omega_{x,y}^{A'B})$ in the case of $d=3$ are presented in \Cref{Fig:OO_Complex_Optimization}.  The separability region for $\omega_{x,y}$ is known to be $1\geq x+y\geq\frac{2}{3}$ and $y\leq\frac{1}{2}$ \cite{Vollbrecht-2001a}, and so multiplicativity of GME holds for these states.  But we also observe multiplicativity for the majority of entangled states as well.  On the other hand, we were able to find explicit instances of non-multiplicativity at 29 out of the 861 points in the approximate region defined by the trapezoid with boundaries $x = 0, x = 0.225, y = 0.95 - x, $ and $y = 1-x$.

\begin{figure}[h!]
           \centering
        \includegraphics[width=.35\textwidth]{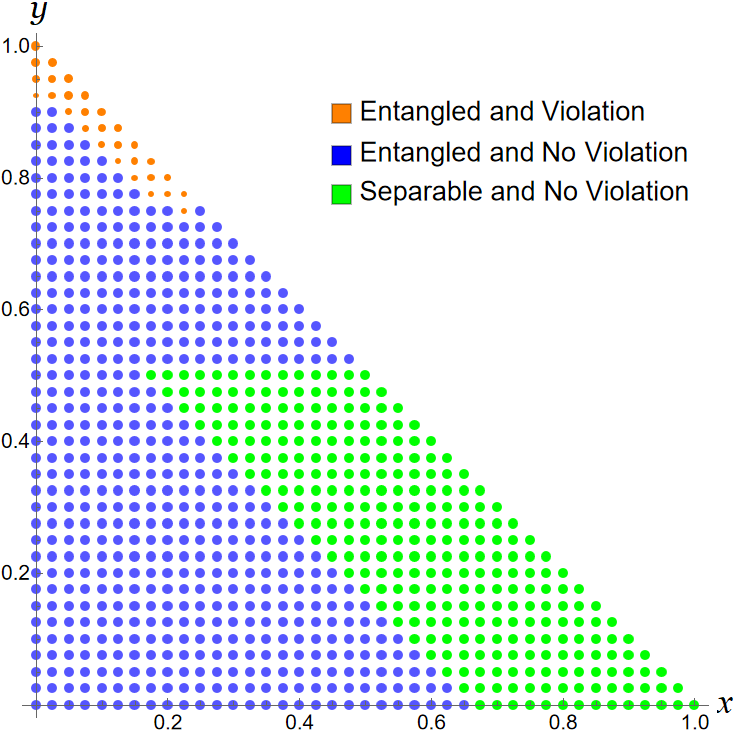}
        \caption{Numerical values of $\Lambda^2(\omega_{x,y}^{AB}\otimes\omega_{x,y}^{A'B'})$ for $d=3$.}
        \label{Fig:OO_Complex_Optimization}
\end{figure}

\subsection{Two-level anti-symmetric states}

The next class of states we consider is built from two-level anti-symmetric states.  For any $i\neq j$, let $\ket{\Psi^{-}_{ij}}=\frac{1}{\sqrt{2}}(\ket{ij}-\ket{ji})$ denote the singlet state in the subspace $\text{span}\{\ket{i}\otimes\ket{i},\ket{i}\otimes\ket{j},\ket{j}\otimes\ket{i},\ket{j}\otimes\ket{j}\}$.  Any $d\otimes d$ probabilistic mixture of such states has the form
\begin{align}
\label{Eq:Singlet_State_Mixture}
    \tau^{AB}=\sum_{j>i = 1}^d p_{ij}\op{\Psi^-_{ij}}{\Psi^-_{ij}}.
\end{align}
To compute the GME of $\tau^{AB}$, let $\ket{\alpha}=\sum_{i=1}^d a_i\ket{i}$ and $\ket{\beta}=\sum_{j=1}^d b_j\ket{j}$ be arbitrary states.  Then
\begin{align}
    \bra{\alpha,\beta}\tau^{AB}\ket{\alpha,\beta}=\frac{1}{2}\sum_{j>i = 1}^dp_{ij}|a_ib_j-a_jb_i|^2.
\end{align}
Extrema of this function can be found by first noticing that
\begin{align}
    \sum_{j>i = 1}^dp_{ij}|a_ib_j-a_jb_i|^2 \leq p^{\#} \sum_{j>i = 1}^d |a_ib_j-a_jb_i|^2
\end{align}
where $p_{ij} \leq p^\#$ for all $i$ and $j$. Now we expand the term in the summand as
\begin{align} \label{Eq:Expanded_Form}
    |a_i|^2 |b_j|^2 + |a_j|^2 |b_i|^2 - 2 Re(a_i a_j^* b_j b_i^*)
\end{align}
and reduce these terms even further. For instance, we see that the first two terms can be reduced to
\begin{align} \notag
    \sum_{j>i}^d& (|a_i|^2 |b_j|^2 + |a_j|^2 |b_i|^2) = \sum_{i \neq j}^d |a_i|^2 |b_j|^2 \\ \notag
    = &\left( \sum_i^d |a_i|^2 \right) \left( \sum_j^d |b_j|^2 \right)    - \sum_i^d |a_i|^2 |b_i|^2 \\ \label{Eq:Reduction_1}
    = & 1 - \sum_i^d |a_i|^2 |b_i|^2
\end{align}
due to normalization of the vectors $\ket{\alpha}$ and $\ket{\beta}$. The last term in (\ref{Eq:Expanded_Form}) can be written as
\begin{align}
    - 2 \sum_{j>i}^d Re(a_i a_j^* b_j b_i^*) = -\sum_{i \neq j}^d Re(a_i a_j^* b_j b_i^*),
\end{align}
but we know that
\begin{align}
    \sum_{i \neq j}^d a_i a_j^* b_j b_i^* 
    = |\ip{\alpha}{\beta}|^2 - \sum_i^d |a_i|^2 |b_i|^2
\end{align}
so that
\begin{align} \label{Eq:Reduction_2}
    - 2 \sum_{j>i}^d Re(a_i a_j^* b_j b_i^*) = -|\ip{\alpha}{\beta}|^2 + \sum_i^d |a_i|^2 |b_i|^2.
\end{align}
Summing the terms in (\ref{Eq:Reduction_1}) and (\ref{Eq:Reduction_2}), we are able to show that (\ref{Eq:Expanded_Form}) is equivalent to the more desirable form of $1 - |\ip{\alpha}{\beta}|^2$, which is bounded by 1; hence, the inequality
\begin{align}
    \sum_{j>i = 1}^dp_{ij}|a_ib_j-a_jb_i|^2 \leq p^{\#}.
\end{align}
Moreover, we can always reach this local optimum by taking inner products with unit vectors that are orthogonal; that is, $\ket{\alpha} = \ket{i}$ and $\ket{\beta} = \ket{j}$ if $p^{\#} = p_{ij}$. Therefore, we conclude that the GME is given by 
\begin{equation}
    \Lambda^2(\tau^{AB}) = \frac{1}{2}\text{max}\{p_{ij}\}.
\end{equation}
Restricting to the case of $d=3$, and using the definitions $p_{12} = x, p_{13} = y,$ and $p_{23} = 1 - x - y$, gives
\begin{align}
\label{Eq:GME-convex-asym}
\Lambda^2(\tau_{x,y}^{AB})=\frac{1}{2}\max\{x,y,1-x-y\}.
\end{align}
Note that the anti-symmetric projector state $\pi^-$ corresponds to the choice of a uniform distribution $x=y=\frac{1}{3}$.  In this case, the formula yields $\Lambda^2(\pi^-)=\frac{1}{6}$, which matches the value of $\Lambda^2(\omega_{0,1})$ in \eqref{Eq:GME-OO}.

Our numerical results for $\Lambda^2(\tau^{AB}_{x,y}\otimes\tau^{A'B'}_{x, y})$ in $3\otimes 3$ are presented in \Cref{Fig:Singlet_Graph}.  Our results show a triangular region of non-multiplicativity around the anti-symmetric projector state.  Interestingly, for the highly biased subfamily of states
\[p\frac{1}{2}(\op{\Psi^-_{12}}{\Psi^-_{12}}+\op{\Psi^-_{13}}{\Psi^-_{13}})+(1-p)\op{\Psi^-_{23}}{\Psi^-_{23}}\]
we observe non-multiplicativity for values of $(1-p)$ as low as $0.05$.
\begin{figure}[h!]
    \centering
    \includegraphics[width=0.65\linewidth]{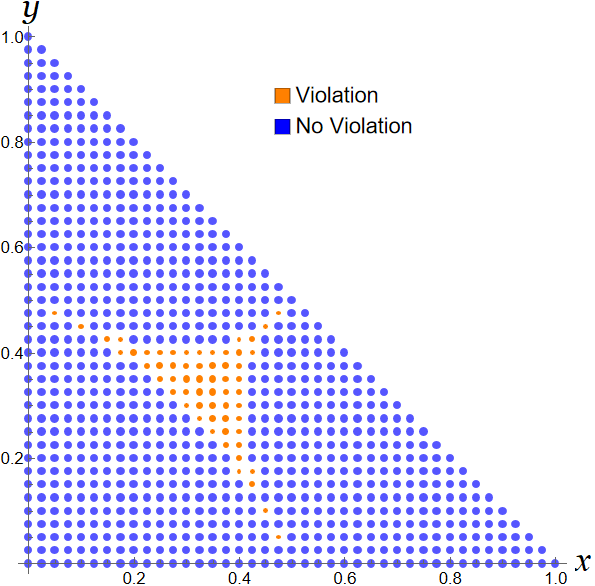}
     \caption{Numerical calculation of $\Lambda^2(\tau_{x,y}^{AB}\otimes\tau_{x,y}^{A'B'})$.}
    \label{Fig:Singlet_Graph}
\end{figure}

\section{GME with real product states}

\label{Sect:real-GME}

We also consider a restricted notion of GME based on how close a given state is to a real product vector.  The definition of the GME is formulated as
\begin{align}
 \Lambda_{\text{real}}^2(\rho^{A_1\cdots A_N})=\max_{\ket{a_1,\cdots,a_N}\in\mc{R}}\bra{a_1,\cdots,a_N}\rho\ket{a_1,\cdots,a_N},\notag
\end{align}
 where $\mc{R}$ is the collection of $N$-partite product states whose coefficients are real in the computational basis.  Of course, it is not surprising that $\Lambda_{\text{real}}^2(\rho^{A_1\cdots A_N})$ and $\Lambda^2(\rho^{A_1\cdots A_N})$ should differ when $\rho^{A_1\cdots A_N}$ itself is not real-valued.  For example, when $\rho^{AB}=\op{0}{0}^A\otimes\op{\wt{+}}{\wt{+}}^{B}$ with $\ket{\wt{+}}=\frac{1}{\sqrt{2}}(\ket{0}+i\ket{1})$, then $\Lambda^2(\op{0}{0}\otimes\op{\wt{+}}{\wt{+}})=1$ while $\Lambda^2_{\text{real}}(\op{0}{0}\otimes\op{\wt{+}}{\wt{+}})=\frac{1}{2}$. What is perhaps surprising, however, is that $\Lambda_{\text{real}}^2(\rho^{A_1\cdots A_N})$ and $\Lambda^2(\rho^{A_1\cdots A_N})$ can differ when $\rho^{A_1\cdots A_N}$ itself is real-valued.  Note that the full class of $(O\otimes O)$-invariant states $\omega_{x,y}$ and the two-level antisymmetric states $\tau_{x,y}$ are both real in the computational basis, and it is interesting to consider whether this phenomenon holds for these states.

 One may claim that the multiplicative property breaks down for the real GME with states that are complex in the computational basis, but we can show this is not always the case with a simple example. For instance, we can define the separable state
 \begin{align}
     \rho^{AB} = \dfrac{1}{4} \left\{ \mathbb{I} \otimes \mathbb{I} - (3/4) Y \otimes Y - (1/4) Z \otimes Z \right\},
 \end{align}
 which is real in the computational basis. We can calculate its local GME to be $5/16$, obtained by the state $\ket{00}$, and show that two copies of $\rho^{AB}$ can achieve a global optimum of $13/128$ for the state $\ket{\Psi^-} \otimes \ket{\Psi^-}$. Notice that $(13/128) - (5/16)^2 = (1/256) > 0$ would imply that the multiplicative property is violated in this instance. This would mean that real two-qubit density operators can still have a non-multiplicative GME even when the optimization is restricted over rebits. 
 
There are a few motivations for considering the real GME.  First, there is an active area of research that studies real theories of quantum mechanics.  In such models, different measures of entanglement emerge based on the notion of real bits or dits (``rebits'' or ``redits'') \cite{Wootters_2014, Stueckelberg_1960, Moretti_2017, Pashaev_2023, Alde_2023}.  This notion of real GME can be made operational from a resource-theoretic perspective in which real-valued objects are considered free \cite{Hickey-2018a, Wu-2021a}, and measures like $\Lambda_{\text{real}}^2(\rho^{A_1\cdots A_N})$ capture how far a given state is from the set of free states.  A second and more practical motivation is that $\Lambda_{\text{real}}^2(\rho^{A_1\cdots A_N})$ is a simpler quantity to compute since its parameter space is smaller. We have especially found faster time-to-solution when maximizing $\Lambda_{\text{real}}^2(\rho^{A_1\cdots A_N})$ in our numerical optimization as compared to $\Lambda^2(\rho^{A_1\cdots A_N})$. This can be quite useful for finding new instances of GME non-multiplicativity since $\Lambda_{\text{real}}^2(\rho^{AB}\otimes\rho^{A'B'})$ provides a lower bound on $\Lambda^2(\rho^{AB}\otimes\rho^{A'B'})$.

For $(O\otimes O)$-invariant states, a similar analysis to \Cref{Sect:OO-states} can be performed in the calculation of $\Lambda_{\text{real}}^2(\omega_{x,y})$. A key difference now is that $\ket{\alpha^*}=\ket{\alpha}$ when restricted to real numbers. This allows for fewer attainable maxima, and we have
\begin{align}
    \label{Eq:GME-OO-real}
\Lambda^2_{\text{real}}(\omega_{x,y})=\max\begin{cases}\frac{2y +d(x+y)}{d(d-1)(d+2)},\\ \frac{1}{3}-\frac{y}{3}-\frac{x(d-2)(d+3)}{3(d-1)(d+2)}\end{cases}.
\end{align}
The two-copy real GME is depicted in \Cref{Fig:OO_Real_Optimization} and shows some notable differences compared to the full GME.  Most notably, we see that non-multiplicativity of $\Lambda^2_{\text{real}}(\omega_{x,y})$ arises even for some separable states.  We also see that the overall number of instances of non-multiplicativity is much larger compared to \Cref{Fig:OO_Complex_Optimization}.

For both families of states, the complex non-multiplicativity region of the GME can be found when restricting the global optimization to real vectors and keeping the local optimizations over the complex space. This is much faster to calculate since we reduce the parameter space in half for each complex vector in the inner product when calculating the global GME, which reduces the number of terms in the objective function by a factor of one-fourth. So instead of finding regions of violation of the GME over the entire complex space, we should restrict to optimizations over the reals for the global objective functions while allowing the local objective functions to be optimized over the complex space. A violation here will surely lead to a violation when the global objectives are not restricted to real vectors since they are a subset of complex vectors. Another thing to note is how simple the local optima were to determine analytically. This makes finding multiplicativity violations much faster and even feasible in cases where considering complex vectors prevent numerical methods to complee their global optimization.
\begin{figure}[h!]
        \centering
        \includegraphics[width=.35\textwidth]{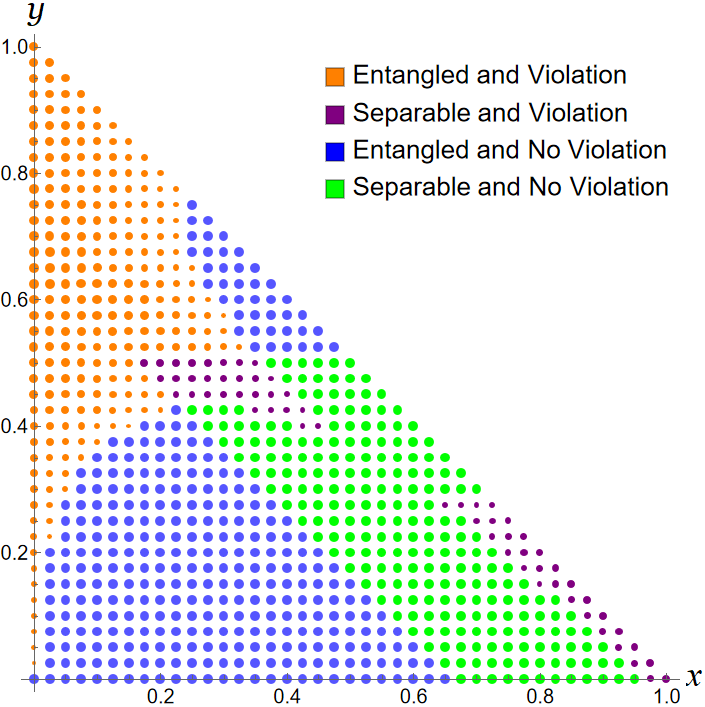}
        \caption{Numerical calculation of $\Lambda_{\text{real}}^2(\omega_{x,y}^{AB}\otimes\omega_{x,y}^{A'B'})$. }
        \label{Fig:OO_Real_Optimization}
\end{figure}

In contrast to the states $\omega_{x,y}$, we did not find any difference in the real GME when considering the states $\tau_{x,y}$.  Since the maximum in \eqref{Eq:GME-convex-asym} can be obtained with real computational basis states, we have that
\begin{align}
    \Lambda^2_{\text{real}}=\frac{1}{2}\max\{x,y,1-x-y\}.
\end{align}
Likewise, when computing $\Lambda^2_{\text{real}}(\tau_{x,y}^{AB}\otimes\tau_{x,y}^{A'B'})$ we found that the results completely matched those shown in \Cref{Fig:Singlet_Graph}.

\section{Conclusion}


In this work, we have explored the multiplicativity of the GME by focusing on specific symmetric families of two-qutrit states.  We found instances of non-multiplicativity in states with large support on the anti-symmetric subspace.  To achieve rigorous verification of the computed bounds, we employed the BARON global optimization solver, which provided a certificate of optimality for both local and global objective functions. A key advantage of our approach was the ability to bypass convex roof extensions by utilizing purifications of mixed states, allowing us to directly relate their GME via trace-based overlap calculations. This ensures that violations of the multiplicative property were both detectable and quantitatively significant, based on predefined thresholds.

Going forward, there are many open problems for further investigation.  First, it would be satisfying to fully resolve the question of GME multiplicity in two-qubit states.  Using the BARON optimization, we explored this question and found no examples of non-multiplicativity.  Another interesting direction would be to study the GME in network entangled states \cite{Navascues-2020a}, which are formed by distributing two or more multipartite entangled states to different subsets of parties.  The natural starting point here would be the triangle network \cite{Kraft-2021a}, and we have not been able to find any examples of GME non-multiplicativity for states formed in this network.  Finally, our study of real GME demonstrates again a sharp qualitative difference between real and complex theories of quantum entanglement.  It is interesting to consider what other differences may emerge between these theories when considering questions of multiplicativity/additivity, and whether this has any operational significance in quantum information theory.

\section{Acknowledgments}
This material is based upon work supported by the U.S. Department of Energy, Office of Science, Accelerated Research in Quantum Computing, Fundamental Algorithmic Research toward Quantum Utility (FAR-Qu), and the National Quantum Information Science Research Centers.

\appendix
\section{Multiplicativity of separable states}\label{app:more}

Here we prove that the GME is strongly multiplicative for all separable states.  An $N$-partite state $\rho^{A_1\cdots A_N}$ is fully separable if it can be expressed as a convex combination of product states,
\begin{align}
\label{Eq:fully-separable}
    \rho^{A_1\cdots A_N}=\sum_{i}p_i\op{a_{1,i},\cdots,a_{N,i}}{a_{1,i},\cdots,a_{N,i}}.
\end{align}
We have that $\Lambda^2(\rho^{A_1\cdots A_N})<1$ whenever there is more than one term in this sum, i.e. whenever $\rho^{A_1\cdots A_N}$ is genuinely mixed.  Here we prove that the GME is strongly multiplicative for fully separable states.
\begin{proposition}
\label{Prop:Separable-GME-multiplicative}
    If $\rho^{A_1\cdots A_N}$ is fully separable and $\sigma^{A_1'\cdots A_N'}$ is any other state, then
    \begin{align}
        \Lambda^2(\rho\otimes \sigma)=\Lambda^2(\rho)\Lambda^2(\sigma).
    \end{align}
\end{proposition}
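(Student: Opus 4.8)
The plan is to prove the two inequalities separately. The bound $\Lambda^2(\rho\otimes\sigma)\ge\Lambda^2(\rho)\Lambda^2(\sigma)$ is immediate: if $\ket{b_1,\cdots,b_N}$ and $\ket{c_1,\cdots,c_N}$ attain the two maxima, then the product vector $\bigotimes_{k}\bigl(\ket{b_k}^{A_k}\otimes\ket{c_k}^{A_k'}\bigr)$ already witnesses it. So the whole content is the reverse inequality, which I would obtain by bounding the overlap of $\rho\otimes\sigma$ with an arbitrary product vector for the $N$-partite cut in which party $k$ holds $A_kA_k'$.

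I would fix a product vector $\ket{\psi_1,\cdots,\psi_N}$ with $\ket{\psi_k}\in A_kA_k'$ and use the separable decomposition $\rho=\sum_i p_i\op{a_{1,i},\cdots,a_{N,i}}{a_{1,i},\cdots,a_{N,i}}$ from \eqref{Eq:fully-separable}. Writing $P_i=\bigotimes_k\op{a_{k,i}}{a_{k,i}}^{A_k}$, one notes that $P_i$ acts only on the unprimed systems while $\sigma$ acts only on the primed ones, so $P_i\otimes\sigma=(P_i\otimes\mbb{I})(\mbb{I}\otimes\sigma)(P_i\otimes\mbb{I})$ by commutativity together with $P_i^2=P_i$. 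Setting $\gamma_{k,i}:=\bra{\psi_k}\bigl(\op{a_{k,i}}{a_{k,i}}^{A_k}\otimes\mbb{I}^{A_k'}\bigr)\ket{\psi_k}\in[0,1]$ and letting $\ket{\alpha_{k,i}}^{A_k'}$ be the normalized residual vector defined by $(\bra{a_{k,i}}^{A_k}\otimes\mbb{I}^{A_k'})\ket{\psi_k}=\sqrt{\gamma_{k,i}}\,\ket{\alpha_{k,i}}$, the overlap factorizes as $\bra{\psi_1,\cdots,\psi_N}(P_i\otimes\sigma)\ket{\psi_1,\cdots,\psi_N}=\bigl(\prod_k\gamma_{k,i}\bigr)\bra{\alpha_{1,i},\cdots,\alpha_{N,i}}\sigma\ket{\alpha_{1,i},\cdots,\alpha_{N,i}}\le\bigl(\prod_k\gamma_{k,i}\bigr)\Lambda^2(\sigma)$, since $\ket{\alpha_{1,i},\cdots,\alpha_{N,i}}$ is an honest product vector for $\sigma$. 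Averaging over $i$ gives $\bra{\psi_1,\cdots,\psi_N}\rho\otimes\sigma\ket{\psi_1,\cdots,\psi_N}\le\Lambda^2(\sigma)\sum_i p_i\prod_k\gamma_{k,i}$.

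The key remaining step is to show $\sum_i p_i\prod_k\gamma_{k,i}\le\Lambda^2(\rho)$. I would observe that $\gamma_{k,i}=\bra{a_{k,i}}\tau_k\ket{a_{k,i}}$ where $\tau_k:=\tr_{A_k'}\op{\psi_k}{\psi_k}$ is the reduced state on $A_k$, which crucially does not depend on $i$. Hence $\sum_i p_i\prod_k\gamma_{k,i}=\tr\bigl[(\tau_1\otimes\cdots\otimes\tau_N)\rho\bigr]$. Since each $\tau_k$ is a density operator on the single party $A_k$, the operator $\tau_1\otimes\cdots\otimes\tau_N$ is fully separable (expand each $\tau_k$ in its eigenbasis and distribute the product), so this trace is a convex combination of quantities $\bra{e_1,\cdots,e_N}\rho\ket{e_1,\cdots,e_N}$, each bounded by $\Lambda^2(\rho)$. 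Combining the two bounds and maximizing over $\ket{\psi_1,\cdots,\psi_N}$ yields $\Lambda^2(\rho\otimes\sigma)\le\Lambda^2(\rho)\Lambda^2(\sigma)$.

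The only step requiring real care is the projector factorization: one must track exactly which subsystems each bra acts on to confirm the overlap splits cleanly as $\bigl(\prod_k\gamma_{k,i}\bigr)$ times an overlap with $\sigma$ (and handle the harmless case $\gamma_{k,i}=0$, where the term vanishes). Everything else is term-by-term estimation of nonnegative convex combinations. The decisive structural feature is that $\tau_k$ is independent of the decomposition index $i$ — this is what lets one repackage $\sum_i p_i\prod_k\gamma_{k,i}$ as a single trace against $\rho$ — and full separability of $\rho$ enters only at that last inequality. In particular, no trace-preserving or maximal-mixedness hypothesis on $\sigma$ is needed, which is why one obtains strong multiplicativity against an arbitrary $\sigma$.
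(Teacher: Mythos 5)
Your proof is correct and follows essentially the same route as the paper's: decompose $\rho$ into product states, peel off the normalized residual vectors $\ket{\alpha_{k,i}}$ to bound the overlap by $\Lambda^2(\sigma)\sum_i p_i\prod_k\gamma_{k,i}$, and then repackage that sum as $\tr[(\bigotimes_k\tau_k)\rho]\le\Lambda^2(\rho)$ using the fact that the reduced states $\tau_k$ do not depend on $i$. The only differences are cosmetic — you spell out the trivial $\ge$ direction, the $\gamma_{k,i}=0$ case, and the eigenbasis expansion of $\bigotimes_k\tau_k$, which the paper leaves implicit.
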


\begin{proof}
    Consider local states $\ket{\psi_k}^{A_kA_k'}$ for $k=1,\cdots, N$ such that 
    \[\Lambda^2(\rho\otimes\sigma)=\bra{\psi_1,\psi_2,\cdots,\psi_N}\rho\otimes\sigma\ket{\psi_1,\psi_2,\cdots,\psi_N}.\]
    Since $\rho^{A_1\cdots A_N}$ is separable, it can be expressed as $\rho=\sum_i p_i\bigotimes_{k=1}^N\op{a_{k,i}}{a_{k,i}}^{A_k}$.  Define the states
    \[\ket{\alpha_{k,i}}^{A_k'}:={}^{A_k}\!\ip{a_{k,i}}{\psi_k}^{A_{k}A_{k}'}/\sqrt{\gamma_{k,i}}\]
    where $\gamma_{k,i}=\bra{a_{k,i}}\tr_{A_k'}(\op{\psi_k}{\psi_k})\ket{a_{k,i}}$.  Then
    \begin{align}
    \label{Eq:separable-ineq}
&\bra{\psi_1,\psi_2,\cdots,\psi_N}\rho\otimes\sigma\ket{\psi_1,\psi_2,\cdots,\psi_N}   \notag\\
&\qquad=\sum_i p_i\prod_{k=1}^n\gamma_{k,i}\bra{\alpha_{1,i},\cdots,\alpha_{N,i}}\sigma\ket{\alpha_{1,i},\cdots,\alpha_{N,i}}\notag\\
&\qquad\leq \Lambda^2(\sigma)\sum_i p_i\prod_{k=1}^n\gamma_{k,i}.
    \end{align}
The final observation is that
\begin{align}
    &\sum_i p_i\prod_{k=1}^n\gamma_{k,i}\notag\\
    &\;\;=\tr\left(\bigotimes_{k=1}^N\tr_{A_k'}(\op{\psi_k}{\psi_k})\sum_{i}p_i\bigotimes_{k=1}^N\op{a_{k,i}}{a_{k,i}}\right)\notag\\
    &\;\;\leq\Lambda^2(\rho).
\end{align}
Putting together the previous two inequalities yields $\Lambda^2(\rho\otimes \sigma)=\Lambda^2(\rho)\Lambda^2(\sigma)$.
\end{proof}
As a consequence of this proposition, we see that an $N$-partite pure state $\ket{\Psi}^{A_1\cdots A_N}$ is strongly multiplicative if any of its $(N-1)$-partite reduced states are separable.  For example, the class of $N$-qudit GHZ states have a strongly multiplicative GME \cite{Cerf-2002a, Ryu-2013a}.

\vfill

\small

\noindent\framebox{\parbox{0.97\linewidth}{
The submitted manuscript has been created by UChicago Argonne, LLC, Operator of
Argonne National Laboratory (``Argonne''). Argonne, a U.S.\ Department of
Energy Office of Science laboratory, is operated under Contract No.\
DE-AC02-06CH11357.
The U.S.\ Government retains for itself, and others acting on its behalf, a
paid-up nonexclusive, irrevocable worldwide license in said article to
reproduce, prepare derivative works, distribute copies to the public, and
perform publicly and display publicly, by or on behalf of the Government.  The
Department of Energy will provide public access to these results of federally
sponsored research in accordance with the DOE Public Access Plan.
http://energy.gov/downloads/doe-public-access-plan.}}

\newpage

\bibliography{GME_references}

\end{document}